
\documentclass[]{interact}

\usepackage{epstopdf}
\usepackage[caption=false]{subfig}

\usepackage[T1]{fontenc}
\usepackage{enumerate}
\usepackage{amssymb}
\usepackage{amsmath}

\usepackage[numbers,sort&compress]{natbib}
\bibpunct[, ]{[}{]}{,}{n}{,}{,}
\makeatletter
\def\NAT@def@citea{\def\@citea{\NAT@separator}}
\makeatother

\theoremstyle{plain}
\newtheorem{theorem}{Theorem}[section]

\newtheorem{proposition}[theorem]{Proposition}

\theoremstyle{definition}
\newtheorem{definition}[theorem]{Definition}
\newtheorem{example}[theorem]{Example}

\theoremstyle{remark}
\newtheorem{remark}{Remark}

\def \R {\mathbb{R}}
\def \C {\mathbb{C}}

\def \e {\varepsilon}

\def \K {\mathbb{K}}
\def \G {\mathbb{G}}

\def \g {\mathfrak{g}}

\def \de {\partial}
\def \Exp {\mathrm{Exp}}

\def \Ad {\mathrm{ad}\,}

\def \longto {\longrightarrow}

\def \lan {\langle}
\def \ran {\rangle}

\begin{document}


\title{On the Baker-Campbell-Hausdorff Theorem:
 non-convergence and prolongation issues}

\author{
\name{
Stefano Biagi\textsuperscript{a}\thanks{Corresponding author. Email {andrea.bonfiglioli6@unibo.it}},
Andrea Bonfiglioli\textsuperscript{b}
and Marco Matone\textsuperscript{c}
}
\affil{
\textsuperscript{a}Dipartimento di Ingegneria Industriale e Scienze Matematiche,
 Università Politecnica delle Marche,  via Brecce Bianche, 12, I-60131 Ancona, Italy; \\
\textsuperscript{b}Dipartimento di Matematica, Università di Bologna,
 Piazza Porta San Donato, 5, I-40126 Bologna, Italy;\\
\textsuperscript{c}Dipartimento di Fisica e Astronomia ``G. Galilei'' and Istituto Nazionale di Fisica Nucleare, Università di Padova,
 via Marzolo, 8, I-35131, Padova, Italy.
}
}

\maketitle

\begin{abstract}
 We investigate some topics related to the
 celebrated Baker-Campbell-Hausdorff Theorem: a non-convergence result and prolongation issues. Given a Banach
 algebra $\mathcal{A}$ with identity $I$, and given $X,Y\in \mathcal{A}$, we study the relationship of different issues:
 the convergence of the BCH series $\sum_n Z_n(X,Y)$, the existence of a logarithm of $e^Xe^Y$, and the convergence
 of the Mercator-type series $\sum_n {(-1)^{n+1}}(e^Xe^Y-I)^n/n$ which provides a selected logarithm of $e^Xe^Y$.
 We fix general results and, by suitable matrix counterexamples, we show that various pathologies can occur, among which we provide a non-convergence result for the
 BCH series.
 This problem is related to some recent results, of interest in physics, on closed formulas for the BCH series:
 while the sum of the BCH series presents several non-convergence issues, these closed formulas can provide
 a prolongation for the BCH series when it is not convergent.
 On the other hand, we show by suitable counterexamples that an analytic prolongation of the BCH series can be singular
 even if the BCH series itself is convergent.
\end{abstract}

\begin{keywords}
 Baker-Campbell-Hausdorff Theorem; Matrix algebras; 
 Convergence of the BCH series; Prolongation of the BCH series; Analytic prolongation; Logarithms.
\end{keywords}

\begin{amscode}
 Primary: 15A16; 15B99; 40A30.
 Secondary: 34A25.   	
\end{amscode}
\section{Introduction and motivations}\label{sec:intro}
 The well-known theorem bearing the names of Baker, Campbell and Hausdorff (BCH, in the sequel)
 has pivotal applications both in mathematics and in physics: for instance,
 in the structure theory of Lie algebras  and Lie groups (both finite- and infinite-dimensional), in group theory, in the analysis of linear PDEs, in the theory of ODEs, in control theory,
 in numerical analysis (particularly in geometric integration), in operator theory, in quantum and statistical mechanics, in physical chemistry, in statistical physics and in
 quantum field theories.
 See \cite{AchillesBonfiglioli} or the recent monograph \cite{BonfiglioliFulci} for a list of related references.

  In very recent years, particularly significant in physics has been the derivation of \emph{closed formulas} for the BCH
  series $Z(x,y):=\sum_nZ_n(X,Y)$, when $X$ and $Y$ are operators satisfying specific commutator relations. Such a progress originated in the
   paper \cite{VanbruntVisser} by Van-Brunt and Visser, and it was soon realized that closed BCH formulas admit relevant extensions by introducing simple algorithms, see \cite{Matone:2015wxa}.
In \cite{Matone:2015xaa} it has been shown that there are 13 types of commutator algebras admitting such closed forms for the BCH formula.
Subsequently, closed BCH formulas for the generators of semisimple complex Lie algebras were derived in \cite{Matone:2015oca}, where
an iterative algorithm generalizing the one in \cite{Matone:2015wxa} was also introduced.

 The above results have been applied in covariantizing the generators of the conformal transformations, in providing explicit expressions of the unitary representations of the fundamental group of Riemann surfaces, in the context
 of conformal field theories, see \cite{Matone:2015hja}. Furthermore, the algorithm in \cite{Matone:2015wxa} was applied in investigating the zero-energy states in conformal field theory with sine-square deformation,
 see \cite{Tamura:2017vbx}. Closed BCH formulas have been found for the contact Heisenberg algebra, see \cite{Bravetti:2016qve}.
 Related investigations also concern the recent papers \cite{Foulis:2017hfo, Kimura:2017xxz} on the Zassenhaus formula.\medskip

  Typically, closed BCH formulas can be derived by a formal
  manipulation of the BCH series $Z(x,y)$, often expressed as the logarithm $\ln(e^Xe^Y)$, or via certain integral representations
  for the sum of the series.
  In general, $Z(X,Y)$ coincides with $\ln(e^Xe^Y)$ only for small norms of $X$ and $Y$ (see Proposition \ref{prop.factspos}).
  Since the power expansion of the exponential is everywhere convergent,
  it follows that a closed expression, say $L(X,Y)$, for
  the sum of the series $\sum_n {(-1)^{n+1}}(e^Xe^Y-I)^n/n$ (giving one selected logarithm of $e^Xe^Y$), which turns out to be meaningful (and analytic)
  in a wider region than the set of convergence of that series, will fulfill
  the identity $e^{L(X,Y)}=e^Xe^Y$ by analytic continuation. The latter identity is very often what physicists look for when dealing with BCH.
  Equivalently put, closed formulas for the BCH series obtained in the above way should be referred to as
  \emph{prolongations} of the sum of the series $Z(X,Y)$, when the latter is not convergent, which is often the case.
  Unfortunately, as we will show, the existence and the actual values of $Z(X,Y)$, of $L(X,Y)$, or of their prolongations can be very differently behaved.\medskip

 Since the coefficients $Z_n(X,Y)$ of the BCH series are well posed in any Lie algebra, the problem of the {convergence}
 of the BCH series $\sum_{n=1}^\infty Z_n(x,y)$ is meaningful in any Lie algebra equipped
 with a metric, e.g., in finite-dimensional Lie algebras or in Banach-Lie algebras.\footnote{By a Banach-Lie algebra we mean a
  Banach space $\mathcal{A}$ (over $\R$ or $\C$) endowed with a Lie algebra structure such that $\mathcal{A} \times \mathcal{A}\ni
  (x, y) \mapsto [x, y]\in \mathcal{A}$ is continuous.} The study of the convergence domain of the BCH series has a very long history, tracing back to Hausdorff
 \cite[Section 4]{Hausdorff}, and the determination of the optimal domain of convergence is still an open and trying problem.
 See, e.g.,
 \cite{BiagiBonfiglioliLAMA, BlanesCasas, BlanesCasasOteoRos2, Bourbaki, Casas, CasasMurua, DaySoThompson, Dynkin, Merigot, Michel, NewmanSoThompson, Postnikov, Strichartz, Suzuki, Thompson, Vinokurov}.
 Related references to the convergence domain of the BCH series focused on its continuous counterpart (of great importance in the applications), the so-called Magnus series,
 can be found e.g., in \cite{BlanesCasasOteoRos, Moan, MoanNiesen, MoanOteo}.

 We do not contribute to the study of the optimality of the BCH convergence set, which is a highly nontrivial problem, as shown e.g., in
 \cite{BlanesCasas, CasasMurua, MoanNiesen}.
 Instead, with the use of selected counterexamples,  we limit ourselves to showing that the BCH and logarithmic series $Z(X,Y), L(X,Y)$ can be very differently behaved,
 as far as convergence/divergence are concerned  (see Section \ref{sec:prolongation}). The problem of the prolongability
 of $Z(X,Y)$ (and whether or not this prolongation gives information on the convergence of $Z(X,Y)$) is also studied here, highlighting
 some unexpected pathologies.


 \medskip

 We would like to highlight some physics applications: there is a basic reason why the convergence issue of the prolongation of the BCH series is a central question, not only by a purely technical point of view,
 but also of considerable mathematical and physical interest. It is known that the BCH Theorem, in its various forms, plays a key role
 in quantum mechanics, quantum field theories, including their path integral formulation, and statistical physics. Main examples concern the Trotter product formula \cite{Strocchi}
 $$
 \exp(A+B)=\lim_{n\to\infty}(\exp(A/n)\exp(B/n))^n,
 $$
 the Magnus expansion, and its generalizations \cite{Bauer}. To understand this, recall that, for example, the transition amplitude $\langle q',t'|q,t\rangle$ that leads to the Dirac-Feynman path integral
 has the form
 \begin{equation}\label{amplitude}
 \langle q',t'|q,t\rangle = e^{-{\frac{i}{\hbar}}H(t'-t)}\delta(q'-q),
 \end{equation}
 where $H=-\frac{\hbar^2}{2m}\, \Delta +V(q)$ is the Hamiltonian, $\Delta$ the Laplace-Beltrami operator and $V(q)$ the potential. A similar expression concerns
 the extension to quantum field theory, where now
 the Dirac tempered distribution is replaced by the functional Dirac distribution.
 In most theories, the path integral formulation is treated as a
 power expansion in the coupling constants. An outstanding problem is that such expansions yield divergent asymptotic series.
 In recent years a new approach,  based on \'Ecalle resurgence theory \cite{Ecalle}\footnote{See
 \cite{Sun:2018ceu} for a recent introduction to the related mould calculus applied to the BCH formula.}, has been developed
 \cite{Basar:2017hpr, Serone:2017nmd}. The main idea is to use transseries expansions, which are faithful and
 unambiguous representations of observables. In such a construction, the analytic continuation plays the fundamental role.
 On the other hand, as illustrated by \eqref{amplitude}, it is clear that the problem of analytic continuation
 translates into a problem of analytic continuation of the BCH formula.\bigskip

  Finally, we describe the plan of the paper.\medskip

  In Section \ref{sec:notation} we
  introduce the precise notation used in the paper along with the statements of the main results.
  We  show that the convergence of $Z(X,Y)$ is totally independent of the existence of $\ln(e^Xe^Y)$, and
  even when both $Z(X,Y)$ and $\ln(e^Xe^Y)$ exist, their values  can actually be different.
 More generally, given a Banach
 algebra $\mathcal{A}$ with identity $I$, and given $X,Y\in \mathcal{A}$, we study three different issues:
 the convergence of the BCH series, the existence of a logarithm of $e^Xe^Y$, and the convergence
 of the Mercator-type series $\sum_n \tfrac{(-1)^{n+1}}{n}(e^Xe^Y-I)^n$ which provides a selected logarithm of $e^Xe^Y$.
 We fix general results and, by suitable
 matrix algebra counterexamples, we show that various pathologies can occur; we provide our main
 \emph{non-convergence} result for the BCH series, for the proof of which a simple Lie-algebraic argument is used.\medskip

  In Section \ref{sec:non-convergence} we give the proofs of the results of Section \ref{sec:notation}.\medskip

  In Section \ref{sec:prolongation} we exhibit some subtle pathologies which can intervene
  in the problem of the \emph{prolongation} of the BCH series.  For instance,
  we show examples where $Z(X,Y)$ can be analytically prolonged to a function $P(X,Y)$ but:
\begin{enumerate}
  \item $P(X,Y)$ has a singularity at $(X_0,Y_0)$ but $Z(X_0,Y_0)$ converges
  (Example \ref{esempio.bello.stefano});

 \item $P(X,Y)$ is everywhere defined on $\g\times \g$ (where $\g$ is a suitable matrix Lie algebra), whereas $Z(X,Y)$ is somewhere non-convergent (Example \ref{exa.eggertiano}).
\end{enumerate}
 The example in (2) above is obtained by combining
 an abstract result in \cite{Eggert}, together with a class of examples of
 Lie algebras of vector fields contained in \cite{BonfLancCPAA}.
 The problem of the convergence of $Z(X,Y)$ in Lie algebras of vector fields
 is of independent interest in the analysis of H\"ormander operators
 (see e.g., \cite{BiagiBonfiglioliCCM, BiagiBonfiglioliPLMS, BonfiMEDIT, BonfLancCPAA}),
 and we shall return to it in a future investigation.

 \section{Notations and main results}\label{sec:notation}
  In its most basic algebraic form, the BCH Theorem ensures that, in the associative algebra
 $\mathbb{K}\langle\!\langle x,y \rangle\!\rangle$ of the formal power
 series in two non-commuting indeterminates $x$ and $y$ over a field $\mathbb{K}$ of characteristic zero, one has
 $e^xe^y=e^{Z(x,y)}$, where $Z(x,y)$ can be expressed as a series of Lie polynomials
 $$Z(x,y)=x+y+\tfrac{1}{2}[x,y]+\tfrac{1}{12}([[x,y],y]+[[y,x],x])-\tfrac{1}{24}[x,[y,[x,y]]]+\cdots .$$
 To be precise, in the sequel we consider the series $\sum_{n=1}^\infty Z_n(x,y)$ that can be obtained from $Z(x,y)$ by grouping together the
 Lie polynomials of degree $n$ in $x$ and $y$, i.e.,
 $$\text{$Z_1(x,y):=x+y$,\quad
 $Z_2(x,y):=\tfrac{1}{2}[x,y]$,\quad
 $Z_3(x,y):=\tfrac{1}{12}([[x,y],y]+[[y,x],x])$,\quad
 etc.}  $$
 This is the so-called \emph{homogeneous} (presentation of the) BCH series.
 Throughout, when a BCH series is concerned, we always
 tacitly understand the homogeneous one.

 More explicitly, once it is known that $Z(x,y)$ is a Lie series, the $Z_n$'s can be explicitly written,
 via the Dynkin-Specht-Wever Lemma (as in \cite[Sec.\,3.3.2]{BonfiglioliFulci}), under the following well-know (Dynkin) presentation\footnote{When
 $j_k=0$ (thus $i_k\neq 0$) the associated summand in \eqref{notation.dynkin1}
 is understood to end with
 $(\Ad x)^{i_k-1}(x)$.}
 \begin{gather}\label{notation.dynkin1}
 \begin{split}
 &Z_n(x,y):=\\
 &\qquad \frac{1}{n}\sum_{k=1}^n \frac{(-1)^{k+1}}{k}\,
 \!\!\!\!\!\sum_{\substack{(i_1,j_1),\ldots,(i_k,j_k)\neq (0,0)\\
 i_1+j_1+\cdots+i_k+j_k=n}}\!\!\!\!\!
 \frac{(\Ad x)^{i_1}(\Ad y)^{j_1}\cdots (\Ad x)^{i_k}(\Ad
 y)^{j_k-1}(y)}{i_1!\,j_1!\,\cdots\,i_k!\,j_k!}\,.
\end{split}
\end{gather}
 The convergence of the BCH series $\sum_{n=1}^\infty Z_n(x,y)$ in the usual\footnote{See e.g., \cite[Section 2.3.3]{BonfiglioliFulci}.}
 metric topology of $\mathbb{K}\langle\!\langle x,y \rangle\!\rangle$
 is a trivial consequence of the increasing degrees of the $Z_n$'s.

 As this will be relevant throughout the paper, we review that (in the algebraic setting of $\mathbb{K}\langle\!\langle x,y \rangle\!\rangle$)
 the series $Z(x,y)$ is uniquely given by $\ln(e^xe^y)$, where
\begin{equation}\label{logaserie}
    \ln(W)=\sum_{k=1}^\infty \frac{(-1)^{k+1}}{k}\,(W-I)^k,
\end{equation}
  for any formal power series $W\in \mathbb{K}\langle\!\langle
 x,y \rangle\!\rangle$ whose zero-degree term is equal to the identity $I$ of $\K$.
 Since the power series $\sum_{k=1}^\infty {(-1)^{k+1}}\,z^k/k$ is usually called the `Mercator series',
 in order to avoid ambiguities in situations where other logarithms can be meaningful
 (as in $\C$ or in matrix algebras), we introduce once and for all a selected notation for what we shall mean by $\ln(e^xe^y)$ in more general settings:
\begin{equation}\label{mercatorlogexp}
\begin{split}
 &L(x,y):=\sum\nolimits_{n} L_n(x,y), \\
 &\qquad\text{where $L_n(x,y):=\frac{(-1)^{n+1}}{n}\,(e^xe^y-I)^n$ for any $n\in \mathbb{N}$.}
 \end{split}
\end{equation}
 With a little abuse, we say that $\sum_{n} L_n(x,y)$ is the \emph{Mercator series} (a shorthand of `Mercator series for $\ln(e^xe^y)$');
 we also say that its sum $L(x,y)$ is the Mercator logarithm of $e^xe^y$, and (when there is no risk of confusion), we may write $\ln(e^xe^y)$
 in place of $L(x,y)$.
 The following identity
\begin{equation}\label{daunaassocloc}
 Z(x,y)=L(x,y)\qquad \text{(also written as $Z(x,y)=\ln(e^xe^y)$)}
\end{equation}
 is clearly equivalent to
  $e^{Z(x,y)}=e^xe^y$, another identity in the formal-power-series setting of
  $\mathbb{K}\langle\!\langle x,y \rangle\!\rangle$
  (this equivalence being not always true in other settings).
  \medskip

 All these facts are so well established in the BCH folklore that one may forget that the following four issues,
 though simple to solve in the formal-power-series setting,
 may be highly non-trivial if one is working outside $\mathbb{K}\langle\!\langle x,y \rangle\!\rangle$:
 \begin{enumerate}
   \item[(I)] the convergence of the Mercator series $\sum_{n=1}^\infty (-1)^{n+1}\,(e^xe^y-I)^n/n$;
   \item[(II)] the convergence of the BCH series $\sum_{n=1}^\infty Z_n(x,y)$;
   \item[(III)] the identity $e^{Z(x,y)}=e^xe^y$, or (generally) the existence of a logarithm of $e^xe^y$;
   \item[(IV)] the equality of the sums $L(x,y)$ and $Z(x,y)$ of the series in (I) and (II).
 \end{enumerate}

 Apart from the algebraic framework of $\mathbb{K}\langle\!\langle x,y \rangle\!\rangle$,
 another kettle of fish is the validity of these four issues
 when $x$ and $y$ belong to more specific topological spaces, as in the case of matrix algebras.
 One of the aims of this paper is to study these problems in a wide framework, that of the \emph{Banach algebras}: we fix some positive general results, and
 (with the use of selected counterexamples), we show that problems (I)-to-(IV) can be very differently behaved.
  Indeed, we shall see that many pathological facts do occur even in the simple case of matrix algebras:
 for example, the BCH series may converge, whereas the Mercator series
 may not; or viceversa;
 or they can be both convergent but with different sums; or they can be both non-convergent, but $e^xe^y$
 may yet admit a logarithm, i.e., some  solution $V$ of $e^V=e^xe^y$.

 As a result, this will point out some inaccuracies, sometimes appearing in the literature, due to a formal application of the BCH Theorem. Indeed,
 any formal manipulation (resemblant to what is allowed in $\mathbb{K}\langle\!\langle x,y \rangle\!\rangle$ with its very simple topology)
 of the BCH series $Z(x,y)$ or of its companion Mercator series $\ln(e^xe^y)$ will invariably lose track of all the mentioned pathologies, especially in convergence issues.\medskip

 In view of the applications, the physics community has paid much attention to the convergence of the BCH series, as already described. However,
  some ambiguity occasionally arises from a formal manipulation of the BCH series.
 Indeed, when dealing with the BCH Theorem in physics applications, one often meets with the  identity
\begin{equation}\label{identitafalsa}
 \ln(e^xe^y)=x+y+\tfrac{1}{2}[x,y]+\tfrac{1}{12}([[x,y],y]+[[y,x],x])+\cdots.
\end{equation}
 Unfortunately, while \eqref{identitafalsa} is certainly true in the formal-series setting of $\mathbb{K}\langle\!\langle
 x,y \rangle\!\rangle$, and it is true in any Banach algebra
 provided that $x$ and $y$ are sufficiently close to zero,
 it can be dramatically false otherwise. Roughly put, this
 is due to the fact that (after an expansion of $\ln(e^xe^y)$ in its Mercator series),
 the terms on any of the two sides of \eqref{identitafalsa}
 are obtained from the terms on the other side by reordering and associating, which are
 not harmless procedures   even in the case of real-valued series.

 Another observable issue of \eqref{identitafalsa} lies in the
 logarithm, in that, while it is uniquely given by \eqref{logaserie} in $\mathbb{K}\langle\!\langle
 x,y \rangle\!\rangle$, in special Banach algebras \eqref{logaserie} may not be the optimal choice:
 consider, for instance, the case of matrix algebras, where a more efficient $\ln$-function can be defined, for many classes of matrices, through the
 Jordan decomposition. Since we consider the general case of Banach algebras, we are compelled to unambiguously choose
 what we mean by the logarithm, which we now do.

 In view of the fact that the BCH coefficients $Z_n$ are constructed via the Mercator series \eqref{logaserie}
 (as it is also visible from the factors $(-1)^{k+1}/k$ in \eqref{notation.dynkin1}), it appears that \eqref{logaserie} is the most natural choice
 if one wants to give a sufficiently comprehensive analysis of our problem, applicable to
 Banach algebras.
 Furthermore, \eqref{logaserie} makes unambiguous sense in any Banach algebra $\mathcal{A}$, if we mean by $I$
 the identity element of $\mathcal{A}$. For this reason, as is frequently done for operator algebras, here and in the sequel we
 adopt the following definition.
\begin{definition}\label{defi.mercator}
 Let $\mathcal{A}$ be a Banach algebra, i.e., a triple $(\mathcal{A},*,\|\cdot\|)$
 where $(\mathcal{A},*)$ is a unital associative algebra (with identity denoted by $I$), and
 $(\mathcal{A},\|\cdot\|)$
 is a (real or complex) Banach space, where the norm $\|\cdot\|$ is compatible with the multiplication,
 i.e., $\|x*y\|\leq \|x\|\,\|y\|$ for
 any $x,y\in \mathcal{A}$.

 Given $W\in \mathcal{A}$, $\ln(W)$ will denote the sum of the Mercator series in \eqref{logaserie},
 when this series converges in the metric space $\mathcal{A}$.
 The function  $\exp:\mathcal{A}\to \mathcal{A}$
 is defined via the usual series $\sum_{k=1}^\infty {W^k}/{k!}$
 (and denoted indifferently by $\exp(W)$ or $e^W$),
 this series being absolutely\footnote{We say that a series $\sum_n a_n$ in $\mathcal{A}$ is
 absolutely convergent if $\sum_n\|a_n\|<\infty$.}
 convergent for any $W\in \mathcal{A}$.
 In what follows, given $W\in \mathcal{A}$, we say that $V\in \mathcal{A}$ is a logarithm of $W$ if $e^V=W$.

 Finally, given any $x,y\in \mathcal{A}$, the notations in \eqref{mercatorlogexp}
 will be applied for the Mercator series $\sum_n L_n(x,y)$ (be it convergent or not) and
 for its sum $L(x,y)$, occasionally also denoted by $\ln(e^xe^y)$.
\end{definition}
 With these definitions at hand,
 identity \eqref{identitafalsa}
 is the equality of the sums of the Mercator and BCH series, a fact which may easily fail to be true;
 luckily, some positive results are available for identity \eqref{identitafalsa} to hold,
 as the following result ensures, belonging to the BCH folklore.\footnote{For the sake of completeness, the proof of Proposition \ref{prop.factspos}
 (based on some re-arranging argument on absolutely convergent series, and on
 analytic-function theory in Banach algebras)
 is sketched in the Appendix, as it is not so easy to locate in the literature.
}
\begin{proposition}\label{prop.factspos}
 Let $\mathcal{A}$ be a Banach algebra, and let $x,y\in \mathcal{A}$.
 Then:
 \begin{enumerate}
   \item[\emph{(a)}]
   If $\|x\|+\|y\|<\ln 2$, the Mercator series $\sum_n L_n(x,y)$ and the
   BCH series $\sum_n Z_n(x,y)$ are both absolutely convergent; moreover,
   the sums of their series are equal, and
   \eqref{identitafalsa} holds true \emph{(}$\ln(e^xe^y)$ meaning the sum of the Mercator series\emph{)}.\medskip

   \item[\emph{(b)}]
 If the Mercator series $\sum_n L_n(x,y)$ converges in $\mathcal{A}$
 \emph{(}without any knowledge on its absolute convergence\emph{)},
 then its sum $L(x,y)$ is a logarithm of $e^xe^y$.\medskip

   \item[\emph{(c)}]
 The same statement as in \emph{(b)} is valid for the  BCH series.
 \end{enumerate}
\end{proposition}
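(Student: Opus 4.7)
The plan is to address the three items in turn, treating (a) by direct norm estimates plus a regrouping argument on absolutely convergent series, while (b) and (c) share a common scheme based on introducing a one-parameter complex family and invoking Abel's theorem together with analyticity.

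For part (a), I would first observe that the hypothesis $\|x\|+\|y\|<\ln 2$ gives
\[
 \|e^xe^y-I\|\le e^{\|x\|+\|y\|}-1<1,
\]
so the Mercator series $\sum_n L_n(x,y)$ is absolutely convergent by geometric majorization. The absolute convergence of $\sum_n Z_n(x,y)$ in the same regime is the classical Dynkin bound (see e.g.\ \cite{BonfiglioliFulci}). The equality of the two sums is inherited from the corresponding identity $L(x,y)=Z(x,y)$ in $\K\langle\!\langle x,y\rangle\!\rangle$: expanding each $L_n(x,y)$ by plugging the series of $e^x$ and $e^y$ into $(e^xe^y-I)^n$ produces a formally infinite sum of noncommutative monomials in $x,y$, and the formal BCH identity says that after collecting monomials of total degree $n$ one recovers exactly $Z_n(x,y)$. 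Absolute convergence in $\mathcal{A}$ renders this regrouping licit and delivers \eqref{identitafalsa}.

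For part (b), set $W:=e^xe^y$ and define, for $z\in\C$,
\[
 f(z):=\sum_{n\ge 1}\frac{(-1)^{n+1}}{n}\,z^n(W-I)^n.
\]
By hypothesis the series converges at $z=1$, so its radius of convergence is at least $1$ and $f$ is analytic on the open unit disk with values in $\mathcal{A}$. On the smaller disk $|z|<1/\|W-I\|$ one has $\|z(W-I)\|<1$, and composition of absolutely convergent power series (the standard Banach-algebra identity $\exp(\ln V)=V$ for $\|V-I\|<1$) gives $e^{f(z)}=I+z(W-I)$. Both sides being $\mathcal{A}$-valued analytic functions on $|z|<1$, this identity propagates to the whole open disk. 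Finally, Abel's theorem for Banach-valued power series (whose proof reduces to the scalar case via Abel summation) yields $f(z)\to f(1)=L(x,y)$ as $z\to 1^-$ along the reals; continuity of $\exp$ then produces
\[
 e^{L(x,y)}=\lim_{z\to 1^-}e^{f(z)}=\lim_{z\to 1^-}(I+z(W-I))=W,
\]
so $L(x,y)$ is a logarithm of $e^xe^y$.

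For part (c), the same scheme applies, exploiting the homogeneity $Z_n(zx,zy)=z^nZ_n(x,y)$: set
\[
 g(z):=\sum_{n\ge 1}z^n\,Z_n(x,y),
\]
which converges at $z=1$ by assumption and is therefore analytic on the open unit disk. For $|z|$ small enough that $|z|(\|x\|+\|y\|)<\ln 2$, part (a) gives $e^{g(z)}=e^{zx}e^{zy}$; both sides being analytic on $|z|<1$ (the right-hand side is entire), the identity persists on the whole open disk, and one more use of Abel's theorem plus continuity of $\exp$ delivers $e^{Z(x,y)}=e^xe^y$. The main subtlety is the Abel-theorem step in the Banach-valued setting, since only conditional convergence at $z=1$ is assumed; this is precisely what makes (b) and (c) nontrivial, as absolute convergence would reduce everything to functional calculus as in (a). A secondary point is the rearrangement argument in (a), legitimized by absolute convergence, which reduces matters to the purely formal BCH identity in $\K\langle\!\langle x,y\rangle\!\rangle$.
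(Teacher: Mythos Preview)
Your proof is correct and follows essentially the same route as the paper: norm estimates plus an absolute-convergence rearrangement for (a), and for (b) and (c) an Abel-type argument applied to the one-parameter family $\sum_n a_n t^n$ combined with analytic continuation from a small neighborhood of the origin. The only cosmetic difference is that you work with a complex parameter $z$ in the open unit disk while the paper uses real $t\in[0,1]$ and unique continuation of real-analytic functions, but the underlying mechanism is identical.
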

 \noindent Problems for \eqref{identitafalsa} soon arise for non-small $\|x\|+\|y\|$, as shown in the next example.
\begin{example}\label{examBiagi}
 Let $\mathcal{M}=M_2(\R)$ denote the usual normed algebra of the real $2\times 2$ matrices, and let us consider (for $v\in \R$) the matrices
\begin{equation}\label{CESbiagi}
 x=x(v):=\left(
       \begin{array}{cc}
         -v & 0 \\
         0 & -2v \\
       \end{array}
     \right)\quad \text{and}\quad
     y:=\left(
       \begin{array}{cc}
         0 & 1 \\
         0 & 0 \\
       \end{array}
     \right).
\end{equation}
 In Section \ref{sec:non-convergence} we shall prove that:
\begin{enumerate}[(i)]
  \item the Mercator series for $\ln(e^{x(v)}e^y)$  converges in $\mathcal{M}$ if and only if $v\geq -\ln\sqrt2$;
  \item the BCH series $\sum_{n}Z_n({x(v)},y)$ converges in $\mathcal{M}$ if and only if $|v|<2\pi$;
  \item there exists a logarithm of $e^{x(v)}e^y$ for every $v\in \R$; this is given e.g., by
 $$Z(v):=\left(
       \begin{array}{cc}
         -v & \psi(v)\\
         0 & -2v \\
       \end{array}
     \right), $$
 where $\psi(v):=\frac{v}{1-e^{-v}}$ is Todd's function (it is understood
 that $\psi(0)=1$).
\end{enumerate}
\end{example}
 Keeping in mind Proposition \ref{prop.factspos} and Example \ref{examBiagi}, in Section
 \ref{sec:non-convergence} we shall prove the following result, investigating in general Banach algebras problems (I)-to-(IV) previously considered for the
 formal-power-series algebra
 $\mathbb{K}\langle\!\langle x,y \rangle\!\rangle$: we provide the mutual relationships of (I)-to-(IV), and we point out the involved pathologies
 for \eqref{identitafalsa} to hold:
\begin{proposition}\label{pro.mutuiqalrelat}
 Given $x,y$ in a Banach algebra $\mathcal{A}$, consider the problems:
 \begin{enumerate}
   \item[\emph{(I)}] the Mercator series $\ln(e^xe^y)=\sum_{n} L_n(x,y)$ is convergent in $\mathcal{A}$;
   \item[\emph{(II)}] the BCH series $\sum_{n}^\infty Z_n(x,y)$ is convergent in $\mathcal{A}$;
   \item[\emph{(III)}] there exists a logarithm of $e^xe^y$, i.e., $V\in \mathcal{A}$ fulfilling the identity $e^V=e^xe^y$;
   \item[\emph{(IV)}] the sums of the Mercator and BCH series are equal.
\end{enumerate}
 Then the following facts hold true:
\begin{enumerate}
 \item[1.] \emph{(II)} is sufficient to \emph{(III)}, but not necessary.

 \item[2.]
 \emph{(I)} is sufficient to \emph{(III)}, but not necessary.

 \item[3.]
 \emph{(I)} and \emph{(II)} are independent of each other.

 \item[4.]
 \emph{(I)}, \emph{(II)}, \emph{(III)} may all be false.

 \item[5.]
 \emph{(I)} and \emph{(II)} may hold true, but \emph{(IV)} can be false.
 \end{enumerate}
\end{proposition}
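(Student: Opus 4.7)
My strategy is to deduce assertions~1--3 from Proposition~\ref{prop.factspos}(b),(c) combined with the family of Example~\ref{examBiagi}, and to handle assertions~4 and~5 with two further ad hoc matrix counterexamples in $\mathcal{A}=M_{2}(\R)$.

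The implications (II)$\Rightarrow$(III) and (I)$\Rightarrow$(III) in items~1 and~2 are direct quotations of parts~(c) and~(b) of Proposition~\ref{prop.factspos}. For the failure of the converse implications, and for the independence claimed in item~3, the plan is to use Example~\ref{examBiagi} at two values of $v$: the choice $v=-1$ satisfies $v<-\ln\sqrt 2$ and $|v|<2\pi$, so by~(i) and~(ii) the BCH series converges while the Mercator series diverges; the choice $v=2\pi$ satisfies $v>-\ln\sqrt 2$ and $|v|\ge 2\pi$, reversing the roles. Part~(iii) of that example exhibits Todd's logarithm in both cases, so (III) holds. This single family therefore witnesses (III) without (I), (III) without (II), (I) without (II) and (II) without (I), settling items~1, 2, 3 in one stroke.

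The main obstacle is item~4, since in Example~\ref{examBiagi} Todd's logarithm is always available; I need a separate counterexample in which (III) itself fails. The candidate I would take is
\[
x=\pi\begin{pmatrix} 0 & -1\\ 1 & 0 \end{pmatrix},\qquad y=\begin{pmatrix} 0 & 1\\ 0 & 0 \end{pmatrix},
\]
for which a direct computation gives $e^{x}=-I$, $e^{y}=I+y$ and
\[
e^{x}e^{y}=\begin{pmatrix} -1 & -1\\ 0 & -1 \end{pmatrix}.
\]
The genuinely non-routine step, which I expect to be the hard part, is to verify that this $2\times 2$ matrix has no logarithm in $M_{2}(\R)$. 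Suppose $V\in M_{2}(\R)$ with $e^{V}=e^{x}e^{y}$; then every eigenvalue $\lambda\in\C$ of $V$ satisfies $e^{\lambda}=-1$, so no $\lambda$ is real and the spectrum of $V$ must be a complex conjugate pair $\pm i(2k+1)\pi$ with $k\in\mathbb{Z}$. These two eigenvalues are distinct, hence $V$ is diagonalisable over $\C$, and consequently $e^{V}$ is diagonalisable as well; but $e^{x}e^{y}$ is a single $2\times 2$ Jordan block at the eigenvalue $-1$ and is therefore \emph{not} diagonalisable, a contradiction. Hence (III) fails for this pair, and items~1 and~2 (in contrapositive form) automatically make (I) and (II) fail too.

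Finally, for item~5 I would choose the commuting pair $x=2\pi R$ with $R=\bigl(\begin{smallmatrix} 0 & -1\\ 1 & 0\end{smallmatrix}\bigr)$ and $y=0$. Since $[x,y]=0$, the Dynkin formula~\eqref{notation.dynkin1} immediately forces $Z_{n}(x,y)=0$ for every $n\ge 2$, so the BCH series collapses to $Z(x,y)=x+y=2\pi R$. Simultaneously $e^{x}e^{y}=e^{2\pi R}=I$, so every Mercator term $L_{n}(x,y)$ vanishes and the Mercator series converges to $L(x,y)=0$. Both series thus converge (trivially), both sums are logarithms of $I$, but $2\pi R\neq 0$, so (I) and (II) hold while (IV) fails, completing the proof.
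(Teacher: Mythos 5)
Your proposal is correct and takes essentially the same route as the paper: items 1--3 follow from Proposition \ref{prop.factspos}(b),(c) together with the family of Example \ref{examBiagi} (your choices $v=-1$ and $v=2\pi$ sit in the same parameter ranges the paper uses), item 4 is a rotation-type counterexample whose product $e^xe^y$ admits no logarithm in $M_2(\R)$, and item 5 is a commuting pair for which the Mercator sum is a ``different branch'' of the logarithm. The only substantive differences are that for item 4 you give a self-contained non-existence proof (spectral mapping forces a distinct conjugate pair $\pm i(2k+1)\pi$, hence diagonalizability, contradicting the Jordan block $\bigl(\begin{smallmatrix}-1 & -1\\ 0 & -1\end{smallmatrix}\bigr)$) whereas the paper invokes Wei's example with the angle $5\pi/4$, and for item 5 you take $x=2\pi R$, $y=0$ in place of the paper's $A=\ln 2\, I+2\pi R$, $B=0$; both variants are valid.
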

 The non-convergence result of the BCH series contained in Example \ref{examBiagi}-(ii) is closely related to
 some recent classes of Lie algebras of interest in physics (see \cite{Lo, VanbruntVisser}).
 In \cite{VanbruntVisser}, Van-Brunt and Visser consider the case of two operators $X$ and $Y$ with
 the commutator relation (with scalar $u,v,c$)
\begin{equation}\label{commuvisser}
 [X,Y]=u\,X+v\,Y+c\,I,
\end{equation}
 where $I$ commutes with both $X$ and $Y$. When $u=v=0$, this comprises the Heisenberg case $[P,Q]=-i\hbar I$ and the
 creation-annihilation commutator $[a,a^\dag]=I$.
 We observe that our Example \ref{examBiagi} falls in this class: indeed, if $X$ and $Y$ are respectively
 given by the matrices $x$ and $y$ in \eqref{CESbiagi}, then
 \eqref{commuvisser} holds true with $u=c=0$ (and any $v$).

 Via some formal and tricky manipulation of the BCH series
 based on a (formal) integral representation for its sum (due to Richtmyer and Greenspan, \cite{RichtmyerGreenspan}), in \cite{VanbruntVisser}
 it is shown that this integral representation, under the assumption \eqref{commuvisser}, is equal to
\begin{equation*}
    X+Y+f(u,v)[X,Y],\qquad \text{where $f(u,v)=\frac{ue^u(e^v-1)-ve^v(e^u-1)}{uv(e^u-e^v)}$.}
\end{equation*}
 As the derivation of this object results from
 the BCH series, it seems to lead to a \emph{closed formula for the BCH series}, as they are usually referred to
 in the physics literature.
 Unfortunately, in general $X+Y+f(u,v)[X,Y]$ cannot be claimed to be the
 sum of the BCH series: indeed, there are suitable choices of $u,v,c$ and of $X,Y$
 which give sense to $X+Y+f(u,v)[X,Y]$ but for which the BCH series is non-convergent:
 namely, take $u=c=0$, $|v|\geq 2\pi$  and $X=x(v),Y=y$ in Example \ref{examBiagi}.

 Moreover, even the Mercator logarithm $\ln(e^Xe^Y)$ may be different from
 $X+Y+f(u,v)[X,Y]$. Indeed, if we take $u=c=0$, $v\leq -\ln\sqrt2$, $X=x(v)$ and $Y=y$ in Example \ref{examBiagi}, then
 the Mercator series for $\ln(e^Xe^Y)$ is not convergent, whereas $X+Y+f(0,v)[X,Y]$
 is perfectly meaningful.

 As a consequence, the following identities contained in \cite{VanbruntVisser}
\begin{equation}\label{vanbruntvisserformulasba}
 Z(X,Y)=\ln(e^Xe^Y)=X+Y+f(u,v)[X,Y]
\end{equation}
 must be read, in terms of the BCH series and of the Mercator logarithm, as follows:
 the far right-hand side, say $V$, of \eqref{vanbruntvisserformulasba}
  is \emph{a prolongation} (for the values of $u,v$ in the domain of $f$) both of the BCH series $Z(X,Y)$ and of the
 Mercator series for $\ln(e^Xe^Y)$
 when these series are not convergent;
 moreover,  $V$ is a logarithm of $e^Xe^Y$, i.e.,
\begin{equation*}
    \exp\big(X+Y+f(u,v)[X,Y]\big)=e^Xe^Y.
\end{equation*}
 However, the existence of a prolongation of $Z(X,Y)$
 does \emph{not} imply the convergence of $Z(X,Y)$ and, viceversa, if a prolongation of the BCH series is singular
 at  $(X_0,Y_0)$ this does \emph{not} imply  that $Z(X_0,Y_0)$ is non-convergent (see Section \ref{sec:prolongation}).\medskip

 In order to clarify the possible non-convergence of the BCH series repeatedly mentioned above, we now state the following result, proved in Section \ref{sec:non-convergence}:
\begin{theorem}[\textbf{A non-convergence result for the BCH series}]\label{mainteo.nonconv}
 Let $\mathcal{A}$ be a Banach algebra \emph{(}or, more generally, a Banach-Lie algebra\emph{)} over
 $\R$ or $\C$. Assume that there exist $X,Y\in \mathcal{A}$
 \emph{(}with $Y\neq 0$\emph{)} and a scalar $v$ such that
 $[X,Y]=vY$.

 Then the BCH series $(X+Y)+\sum_{n=2}^\infty Z_n(X,Y)$
 coincides with the series
 $$(X+Y)+\sum_{n=1}^\infty\frac{(-v)^nB_n}{n!}\,Y, $$
 where the $B_n$'s are the Bernoulli numbers, i.e., the rational numbers uniquely determined
 by the generating function $\frac{z}{e^z-1}=\sum_{n=0}^\infty \frac{B_n}{n!}\,z^n$.
 Thus the BCH series $\sum_{n=1}^\infty Z_n(X,Y)$
 converges in $\mathcal{A}$ if and only if
 $|v|<2\pi$, and in this case the sum of the series is
\begin{equation}\label{vanbruntvisserformula2}
 X+\psi(v)\,Y, \quad \text{where $\psi(v):=\frac{v}{1-e^{-v}}$ is Todd's function}.
\end{equation}
 In particular, if $|v|\geq 2\pi$ the BCH series  $\sum_{n=1}^\infty Z_n(X,Y)$
 is not convergent.
 \end{theorem}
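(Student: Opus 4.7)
The plan is to exploit the rigidity imposed by the relation $[X,Y]=vY$ to force the BCH coefficients into a one-dimensional subspace, and then to identify the resulting scalar coefficients by computing in a concrete $2\times 2$ matrix model where Proposition~\ref{prop.factspos}(a) applies.

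First I would prove, by induction on the total degree $n\geq 2$, that \emph{every} iterated Lie bracket in $X,Y$ of degree $n$ is a rational scalar multiple of $v^{n-1}Y$. The base case $n=2$ is immediate from $[X,Y]=vY$, $[Y,X]=-vY$ and $[X,X]=[Y,Y]=0$. For the inductive step, write a degree-$n$ bracket as $[u_1,u_2]$ with $\deg u_i\geq 1$: if both $\deg u_i\geq 2$, then by induction both $u_i$'s are scalar multiples of $Y$ and the outer bracket vanishes; if instead one of the $u_i$'s has degree $1$ (so it lies in $\{X,Y\}$), then the other is of the form $\lambda\,v^{n-2}Y$ by induction, and the bracket reduces to $0$ or $\pm\lambda\,v^{n-1}Y$. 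Substituting this information into Dynkin's presentation \eqref{notation.dynkin1} yields
\begin{equation*}
 Z_n(X,Y)=k_n\,v^{n-1}\,Y\qquad(n\geq 2),
\end{equation*}
where $k_n\in\mathbb{Q}$ is a universal constant depending only on $n$, independent of $v$, $X$, $Y$ and of the ambient algebra.

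To pin down the $k_n$'s, I would specialize to a concrete $2\times 2$ realization where the BCH series converges absolutely. Fix a small $\delta>0$ and, for real $w$ with $|w|+\delta<\ln 2$, set $X_w:=\bigl(\begin{smallmatrix}0&0\\0&-w\end{smallmatrix}\bigr)$ and $Y_\delta:=\bigl(\begin{smallmatrix}0&\delta\\0&0\end{smallmatrix}\bigr)$ in $M_2(\mathbb{R})$, so that $[X_w,Y_\delta]=w\,Y_\delta$. Proposition~\ref{prop.factspos}(a) guarantees that the BCH series converges and equals $\ln(e^{X_w}e^{Y_\delta})$; a direct computation gives $e^{X_w}e^{Y_\delta}=\bigl(\begin{smallmatrix}1&\delta\\0&e^{-w}\end{smallmatrix}\bigr)$, which coincides with $e^{X_w+\psi(w)Y_\delta}$ for $\psi(w)=w/(1-e^{-w})$ (Todd's function). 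Since the BCH sum lies in the subalgebra $\mathbb{R}X_w+\mathbb{R}Y_\delta$ and is a logarithm of $e^{X_w}e^{Y_\delta}$, it must equal $X_w+\psi(w)Y_\delta$. Combining this with Step~1 and using $Y_\delta\neq 0$ gives
\begin{equation*}
 1+\sum_{n\geq 2}k_n\,w^{n-1}=\psi(w)=\sum_{m\geq 0}\frac{B_m(-w)^m}{m!},
\end{equation*}
valid for $w$ in an open interval around $0$, where the last identity follows from the Bernoulli generating function $u/(e^u-1)=\sum_m B_m u^m/m!$ with $u=-w$. Uniqueness of power-series coefficients then forces $k_n=(-1)^{n-1}B_{n-1}/(n-1)!$, which is exactly the announced $Z_n(X,Y)=(-v)^{n-1}B_{n-1}\,Y/(n-1)!$ for every $n\geq 2$.

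Finally, since $Y\neq 0$, the BCH series converges in the norm of $\mathcal{A}$ if and only if the scalar series $\sum_n(-v)^{n-1}B_{n-1}/(n-1)!$ converges. Because $z\mapsto z/(e^z-1)$ is meromorphic on $\mathbb{C}$ with nearest poles at $\pm 2\pi i$, the radius of convergence of $\sum_m B_m z^m/m!$ is exactly $2\pi$; hence the BCH series converges iff $|v|<2\pi$, and in that case its sum is $X+Y+(\psi(v)-1)Y=X+\psi(v)Y$, yielding \eqref{vanbruntvisserformula2}. The main difficulty is the inductive Step~1: one must rigorously track the power of $v$ created by each nested bracket. Once that structural fact is in hand, the remaining steps reduce respectively to an elementary $2\times 2$ matrix computation and to the classical observation on the Bernoulli radius of convergence.
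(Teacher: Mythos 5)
Your overall strategy is sound, and it genuinely differs from the paper in one place: your Step 1 is the same structural observation the paper makes (under $[X,Y]=vY$ the only surviving words in Dynkin's presentation \eqref{notation.dynkin1} are those in which $Y$ occurs exactly once, each contributing a universal rational multiple of $v^{n-1}Y$), but where the paper then quotes the classical formal-series identity for the part of the BCH series that is linear in $y$, namely $\sum_{k}\frac{(-1)^kB_k}{k!}(\Ad x)^k(y)$, which names the Bernoulli numbers at once, you instead determine the universal constants $k_n$ by evaluating the series in an explicit $2\times 2$ model, using Proposition \ref{prop.factspos}(a) for absolute convergence, computing $e^{X_w}e^{Y_\delta}$ and its logarithm by hand, and matching Taylor coefficients. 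This buys independence from the cited formal-series result at the price of the extra matrix computation; it is correct, provided you justify the phrase ``it must equal $X_w+\psi(w)Y_\delta$'' concretely, e.g.\ by noting that Step 1 forces the sum to be of the form $X_w+c(w)Y_\delta$ and then reading $c(w)=\psi(w)$ off the $(1,2)$ entry of $\exp(X_w+cY_\delta)$ (which is strictly monotone in $c$ for $w\neq 0$), rather than appealing to an unproved uniqueness of logarithms.

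The genuine gap is the boundary case $|v|=2\pi$. The theorem asserts non-convergence for \emph{all} $|v|\geq 2\pi$ (and the paper actually needs this: Example \ref{exa.eggertiano} uses exactly $v=2\pi$), but your final step only argues that the nearest poles of $z/(e^z-1)$ are at $\pm 2\pi i$, so the radius of convergence of $\sum_n B_nz^n/n!$ is $2\pi$. That settles divergence for $|v|>2\pi$ and convergence for $|v|<2\pi$, but it says nothing on the circle $|v|=2\pi$, where a power series may perfectly well converge; so your ``iff'' is not yet proved as stated. The paper closes this in Remark \ref{rem.convberno}: from $\zeta(2n)=\frac{(2\pi)^{2n}(-1)^{n-1}B_{2n}}{2\,(2n)!}$ and $\zeta(2n)\to 1$ one gets $\frac{|B_{2n}|}{(2n)!}\sim\frac{2}{(2\pi)^{2n}}$, hence on $|z|=2\pi$ the even-degree terms of $\sum_n B_nz^n/n!$ tend to $2$ rather than to $0$, and the series diverges there. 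You need this (or an equivalent Bernoulli asymptotic) to justify non-convergence at $|v|=2\pi$ and hence the full statement.
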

 The case $Y=0$ in Theorem \ref{mainteo.nonconv} is trivial, since $Z(X,0)=X$ is always convergent.

 It may be of interest to observe that, in proving Theorem \ref{mainteo.nonconv}, we shall not make use
 of the integral representation of the $Z_n$'s
 by Richtmyer and Greenspan,  \cite{RichtmyerGreenspan} (as is done in \cite{VanbruntVisser}), but only of
 a simple Lie-algebra argument.\medskip


  Once it is clear that
  the prolongations of the maps $(X,Y)\mapsto Z(X,Y), L(X,Y)$
  must not be confused with the convergence of the associated series, we think it is worthwhile to have
  some counterexamples at hand, showing the total independence of prolongation and convergence: these are provided in Section \ref{sec:prolongation}, where
  we shall prove the next result.
\begin{example}\label{esempio.bello.stefano}
  Let $\mathcal{M}=M_2(\C)$ denote the usual normed algebra of the complex $2\times 2$ matrices.
  In $\mathcal M$ we consider the matrices
\begin{equation}\label{CESbiagibello}
 X(\alpha):=\left(
       \begin{array}{cc}
         -\alpha & 0 \\
         0 & -2\alpha \\
       \end{array}
     \right)\quad \text{and}\quad
     Y(\beta):=\left(
       \begin{array}{cc}
         0 & \beta\,(\beta-2\pi\,i) \\
         0 & 0 \\
       \end{array}
     \right).
\end{equation}
 Then the BCH series $Z(\alpha,\beta):=Z(X(\alpha),Y(\beta))$  converges if and only if
 $(\alpha,\beta)$ is in
\begin{equation}\label{CESbiagibello2}
 D:=
 \big\{(\alpha,\beta)\in \C^2:\,|\alpha|<2\pi,\,\,\beta\notin\{0,2\pi\,i\}\big\}\cup
 (\C\times \{0,2\pi\,i\}),
\end{equation}
 and the sum is given by
\begin{equation}\label{sommosa}
 Z(\alpha,\beta)=\left\{
                   \begin{array}{ll}
\left(\begin{array}{cc}
         -\alpha & \frac{\alpha\,\beta\,(\beta-2\pi\,i)}{1-e^{-\alpha}}\\
         0 & -2\alpha \\
       \end{array}
     \right),
  & \hbox{if $|\alpha|<2\pi$ and $\beta\notin\{0,2\pi\,i\}$} \\[0.4cm]
                     \left(\begin{array}{cc}
         -\alpha & 0\\
         0 & -2\alpha \\
       \end{array}
     \right), & \hbox{if $\alpha\in \C$ and $\beta\in\{0,2\pi\,i\}$.}
                   \end{array}
                 \right.
\end{equation}
 The interior of $D$ is
 $\textrm{Int}(D)=\big\{(\alpha,\beta)\in \C^2:\,|\alpha|<2\pi\big\}$, and $(\alpha,\beta)\mapsto Z(\alpha,\beta)$
 is analytic here.
 On the other hand, the restriction of $Z(\alpha,\beta)$ to $\textrm{Int}(D)$ can be prolonged to the $\mathcal M$-valued map
 $$(\alpha,\beta)\mapsto
 P(\alpha,\beta):=\left(
       \begin{array}{cc}
         -\alpha & \frac{\alpha\,\beta\,(\beta-2\pi\,i)}{1-e^{-\alpha}}\\
         0 & -2\alpha \\
       \end{array}
     \right), $$
 which is analytic on the open set
 $$\Omega=\big\{(\alpha,\beta)\in \C^2:\,\alpha\neq
 2k\pi\,i\,\,\text{with $k\in \mathbb{Z}\setminus\{0\}$}\big\},$$
 and $\Omega$ clearly contains
 $\textrm{Int}(D)$.
 Hence, the prolongation is singular at $(2\pi\,i,2\pi\,i)$, whereas
 the BCH series $Z(2\pi\,i,2\pi\,i)$ converges to $X(2\pi\,i)$, because
 $(2\pi\,i,2\pi\,i)\in D$.
\end{example}
\begin{remark}\label{re.perspective}
 The above Example \ref{esempio.bello.stefano} is connected with some results in \cite{BlanesCasas}, where
 the non-convergence of the BCH series is related to the singularity of its prolongation.
 Our example shows that non-convergence cannot be directly inferred from the singularity of a prolongation;
 Example 1 in \cite{BlanesCasas} contains computations leading to the non-convergence of the BCH series at some singular points of its prolongation.
 In this sense,
  \cite[Example 1]{BlanesCasas} provides a non-trivial
  example where the non-convergence of the BCH series occurs
  at points where the singularity of its prolongation takes place, a phenomenon which not always happens (as
Example \ref{esempio.bello.stefano} demonstrates).
\end{remark}

 Dual to the phenomenon depicted in Example \ref{esempio.bello.stefano}, we have the following scenario,
 where
 the BCH series $Z(X,Y)$ admits a global prolongation on $\g\times \g$ (for a suitable
 real and finite-dimensional Lie algebra $\g$), but the series is not everywhere convergent.
\begin{example}\label{exa.eggertiano}
 In the real algebra $M_3(\R)$ of the $3\times 3$ matrices, consider
\begin{equation}\label{esemprioduale}
 A=\left(
     \begin{array}{ccc}
       0 & 0 & 0 \\
       0 & 2\pi & 0 \\
       0 & 0 & 1 \\
     \end{array}
   \right),\quad
 B=\left(
     \begin{array}{ccc}
       0 & 0 & 0 \\
       -2\pi & 0 & 0 \\
       0 & 0 & 0 \\
     \end{array}
   \right),\quad
  C=\left(
     \begin{array}{ccc}
       0 & 0 & 0 \\
       0 & 0 & 0 \\
       -1 & 0 & 0 \\
     \end{array}
   \right).
\end{equation}
 Their commutator relations are
 $$[A,B]=2\pi\,B,\quad [A,C]=C,\quad [B,C]=0. $$
 Hence  $\g:=\textrm{span}\{A,B,C\}$
 is a Lie subalgebra of $M_3(\R)$.
 By Proposition \ref{prop.factspos}-(a),
 the BCH series $Z(X,Y)$ converges in $\g$ for $X,Y\in \g$ close enough to the null matrix.

 In Section \ref{sec:prolongation}, we shall prove that the map $(X,Y)\mapsto Z(X,Y)$ admits
 an analytic prolongation to the whole of $\g\times \g$, by using a notable abstract result by Eggert, \cite{Eggert}.
 On the other hand, since $[A,B]=2\pi\,B$, we are entitled to apply Theorem \ref{mainteo.nonconv} and derive that
 the BCH series $Z(A,B)$ does not converge, despite its global prolongability.
\end{example}
\section{A non-convergence result for the BCH series}\label{sec:non-convergence}
  In this section we prove
 Theorem \ref{mainteo.nonconv}, the results in Example \ref{examBiagi}
 and Proposition \ref{pro.mutuiqalrelat}.

\begin{proof}[Proof (of Theorem \ref{mainteo.nonconv}).]
 Taking for granted the notation in the statement of the theorem,
 we explicitly compute the $Z_n(X,Y)$'s.

 Since we know that $Z_1(X,Y)=X+Y$, we can suppose $n\geq 2$.
 With reference to the notation in Dynkin's presentation \eqref{notation.dynkin1},
 from $[X,Y]=vY$, $[Y,X]=-vY$ and the trivial fact $[Y,Y]=0$, one gets that
 the only possibly non-vanishing summands of
 \eqref{notation.dynkin1} are related to the indices
 for which $j_1+\cdots+j_k=1$. Thus, when $n\geq 2$,
 $Z_n(X,Y)$ coincides with the sum of the terms in formula \eqref{notation.dynkin1} where $Y$ appears precisely once.

 At the formal-power-series level of $\mathbb{Q}\langle\!\langle x,y \rangle\!\rangle$, we know
 from very classical results (see e.g., \cite[eq.\,(4.173)]{BonfiglioliFulci}) that the sum of the terms in $\sum_{n\geq 1}Z_n(x,y)$
 where $y$ appears exactly once is equal to
 $$\sum\nolimits_{k=0}^\infty \frac{(-1)^kB_k}{k!} (\Ad x)^k(y),$$
 where the $B_k$'s are the Bernoulli numbers.
 Gathering these things, by degree reasons,
 $$Z_{n+1}(X,Y)=\frac{(-1)^nB_n}{n!} (\Ad X)^n(Y),\quad \forall \,\,n\geq 1.$$
 On the other hand, from $[X,Y]=vY$ one gets $(\Ad X)^n(Y)=v^n\,Y$ for $n\geq 1$, so that
\begin{equation*}
  Z_{n+1}(X,Y)=\frac{(-1)^nB_n}{n!}\, v^nY,\quad \forall \,\,n\geq 1.
\end{equation*}
   Thus, the BCH series $\sum_{n=1}^\infty Z_n(X,Y)=(X+Y)+\sum_{n=2}^\infty Z_n(X,Y)$ coincides
   with
\begin{equation}\label{Znespliciti2}
 (X+Y)+\sum_{n=1}^\infty\frac{(-1)^nB_n}{n!}\,v^nY=X+\left(\sum_{n=0}^\infty\frac{B_n}{n!}\,(-v)^n\right) Y.
\end{equation}
   Since the $B_n$'s are defined by
   $\frac{z}{e^z-1}=\sum_{n=0}^\infty \frac{B_n}{n!}\,z^n$,
   the radius of convergence of the power series
   $\sum_{n=0}^\infty \frac{B_n}{n!}\,z^n$ is $2\pi$.
   It can be proved that the power series does not converge when $|z|=2\pi$
   (for completeness reason, we furnish the proof of this in Remark \ref{rem.convberno}).

   This shows that (since $Y\neq 0$) the BCH series $\sum_{n=1}^\infty Z_n(X,Y)$ converges if and only if $|v|<2\pi$.
   As for its sum, if $|v|<2\pi$ we have
\begin{align*}
    \sum_{n=0}^\infty\frac{B_n}{n!}\,(-v)^n=
    \frac{-v}{e^{-v}-1}=\psi(v),
\end{align*}
 so that, on account of  \eqref{Znespliciti2}, we get \eqref{vanbruntvisserformula2}.
 This ends the proof of Theorem \ref{mainteo.nonconv}.
\end{proof}

 Next we prove the results in Example \ref{examBiagi}.
 For $v\in\R$, let $X=X(v)$ and $Y$ be respectively
 the matrices $x=x(v)$ and $y$ in \eqref{CESbiagi}. A direct computation shows that
 $$[X,Y]=\left(
           \begin{array}{cc}
             0 & v \\
             0 & 0 \\
           \end{array}
         \right)=v\,Y,$$
         so that (with $\mathcal{A}=M_2(\R)$)
 we are entitled to apply Theorem \ref{mainteo.nonconv}.
 Hence assertion (ii) of Example \ref{examBiagi} follows directly from that theorem.
  By a direct computation we have
 $$e^Xe^Y=
 \left(\begin{array}{cc}
             e^{-v} & 0 \\
             0 & e^{-2v} \\
           \end{array}
         \right)\cdot
 \left(
       \begin{array}{cc}
         1 & 1 \\
         0 & 1 \\
       \end{array}
     \right)=
     \left(
       \begin{array}{cc}
         e^{-v} & e^{-v} \\
         0 & e^{-2v} \\
       \end{array}
     \right).$$
     Thus, the Mercator series \eqref{mercatorlogexp}
 boils down to the matrix series (be it convergent or not)
\begin{equation}\label{serielogarosamatrix}
\sum_{n=1}^\infty \frac{(-1)^{n+1}}{n}\,
 \left(
   \begin{array}{cc}
     e^{-v}-1 & e^{-v} \\
     0 & e^{-2v}-1 \\
   \end{array}
 \right)^n.
\end{equation}
 When $v=0$ this series trivially converges, and its sum is
 equal to $Y$;
 hence we can assume $v\neq 0$.
 By a direct diagonalization, we have
 $$
 \left(
   \begin{array}{cc}
     e^{-v}-1 & e^{-v} \\
     0 & e^{-2v}-1 \\
   \end{array}
 \right)
 =
 \left(
   \begin{array}{cc}
     1 & 1 \\
     0 & e^{-v}-1 \\
   \end{array}
 \right)
 \cdot
 \left(
   \begin{array}{cc}
     e^{-v}-1 & 0 \\
     0 & e^{-2v}-1 \\
   \end{array}
 \right)
 \cdot
 \left(
   \begin{array}{cc}
     1 & \frac{1}{1-e^{-v}} \\
     0 & \frac{-1}{1-e^{-v}} \\
   \end{array}
 \right).
  $$
 As a consequence,
 the series \eqref{serielogarosamatrix}
 is equal to
  $$
 \left(
   \begin{array}{cc}
     1 & 1 \\
     0 & e^{-v}-1 \\
   \end{array}
 \right)
 \cdot
 \left(
   \begin{array}{cc}
     \sum\limits_{n=1}^\infty \frac{(-1)^{n+1}}{n}(e^{-v}-1)^n & 0 \\
     0 & \sum\limits_{n=1}^\infty \frac{(-1)^{n+1}}{n}(e^{-2v}-1)^n \\
   \end{array}
 \right)
 \cdot
 \left(
   \begin{array}{cc}
     1 & \frac{1}{1-e^{-v}} \\
     0 & \frac{-1}{1-e^{-v}} \\
   \end{array}
 \right).
  $$
 Now this series is convergent if and only if
 $e^{-v}-1$ and $e^{-2v}-1$ belong to $]-1,1]$, and this is equivalent to $v\geq -\ln\sqrt2$.
 This proves assertion (i) of Example \ref{examBiagi}. Finally, assertion (iii), which is equivalent to
  $$\exp\left(
       \begin{array}{cc}
         -v & \frac{v}{1-e^{-v}}\\
         0 & -2v \\
       \end{array}
     \right) =
     \left(
       \begin{array}{cc}
         e^{-v} & e^{-v} \\
         0 & e^{-2v} \\
       \end{array}
     \right)\qquad \forall\,\, v\in\ \R,$$
 can be proved by a direct diagonalization. Finally, we provide the following

 \begin{proof}[Proof (of Proposition \ref{pro.mutuiqalrelat})]
 Numbers are related to the statements in the thesis of
 Proposition \ref{pro.mutuiqalrelat}:
\begin{enumerate}
 \item
 Sufficiency follows from statement (c) in Proposition \ref{prop.factspos}.
 The lack of necessity is shown in Example \ref{examBiagi}, if one takes
 $|v|\geq 2\pi$.

 \item
  Sufficiency follows from statement (b) in Proposition \ref{prop.factspos}.
 The lack of necessity is shown in
 Example \ref{examBiagi}, if one takes
 $v< -\ln\sqrt2$.
 Another simpler example: if one chooses $\mathcal{A}=\R$, $y=0$ and $x>\ln2$, then
 (III) holds with $V=x$, but $\sum_{n=1}^\infty (-1)^{n+1}(e^x-1)^n/n$ is not convergent.

 \item
 On the one hand, it is simple to show that (II) may hold without (I): for instance,
 in $\R$, if we take $x>\ln 2$ and $y=0$, then the BCH series boils down to $\sum_{n=1}^\infty Z_n(x,0)=x$ (and is therefore trivially convergent),
 whereas the Mercator series for $\ln(e^xe^y)$ does not converge, as observed above.
 A less trivial example is again given by Example \ref{examBiagi}, by taking $v\in (-2\pi,-\ln\sqrt2)$.
 Vice versa, the choice $v>2\pi$ yields an example for which (I) holds true but (II) is false.

 \item
 Taking the example by Wei \cite{Wei} related to the Banach algebra
 $\mathcal{M}=M_2(\R)$ and the matrices
 $$ W:=\left(
       \begin{array}{cc}
         0 & -5\pi/4\\
         5\pi/4 & 0 \\
       \end{array}
     \right)\quad \text{and}\quad
     Y:=\left(
       \begin{array}{cc}
         0 & 1 \\
         0 & 0 \\
       \end{array}
     \right),$$
 it can be proved that there does not exist any logarithm of $e^We^Y$ in $\mathcal{M}$. Thus, in view of
 statements (b) and (c) in Proposition \ref{prop.factspos}, both the BCH series $\sum_n Z_n(W,Y)$ and the Mercator series
 $\sum_n L_n(W,Y)$ cannot converge, otherwise they would provide such a logarithm.

 \item
 It can be easily seen that, with the following choice
 \begin{equation*}
    A:=
\left(
  \begin{array}{cc}
    \ln 2 & -2\pi \\
    2\pi & \ln 2 \\
  \end{array}
\right)
\quad \text{and}\quad
     B:=\left(
       \begin{array}{cc}
         0 & 0 \\
         0 & 0 \\
       \end{array}
     \right),
 \end{equation*}
 then  (as $A,B$ commute) the BCH series boils down to $Z(A,B)=Z_1(A,B)=A+B=A$, whereas
 the Mercator series is given by
 $$
 L(A,B)=\sum_{n=1}^\infty \frac{(-1)^{n+1}}{n}\,
 \left(
   \begin{array}{cc}
     1 & 0 \\
     0 & 1 \\
   \end{array}
 \right)^n=
 \left(
   \begin{array}{cc}
     \ln 2 & 0 \\
     0 & \ln 2 \\
   \end{array}
 \right).$$
 Thus, $Z(A,B)\neq L(A,B)$ even if they are both convergent.
\end{enumerate}
 This ends the proof.
\end{proof}
\section{Prolongation issues for the BCH series}\label{sec:prolongation}
 As a first prolongation problem, we prove the assertions in
 Example \ref{esempio.bello.stefano}, showing that
 the BCH series $Z(X,Y)$ can possess an analytic prolongation
 which is singular at $(X_0,Y_0)$, whereas
 $Z(X_0,Y_0)$ converges.

\begin{proof}[Proof (of Example \ref{esempio.bello.stefano})]
   Let $X(\alpha),Y(\beta)\in \mathcal{M}=M_2(\C)$
 be as in \eqref{CESbiagibello}. Note that
 $$[X(\alpha),Y(\beta)]=\alpha\,Y(\beta).$$
 When $\beta\in \{0,2\pi\,i\}$, we have $Y(\beta)=0$ and the BCH series
 $Z(\alpha,\beta):=Z(X(\alpha),Y(\beta))$ boils down to $X(\alpha)$. Instead, when
 $\beta\notin \{0,2\pi\,i\}$, we can apply Theorem \ref{mainteo.nonconv}, deducing
 the convergence of $Z(\alpha,\beta)$
 precisely when $|\alpha|<2\pi$
 to the sum (see \eqref{vanbruntvisserformula2})
\begin{equation*}
 X(\alpha)+\psi(\alpha)\,Y(\beta)=
 \left(
       \begin{array}{cc}
         -\alpha & \frac{\alpha\,\beta\,(\beta-2\pi\,i)}{1-e^{-\alpha}}\\
         0 & -2\alpha \\
       \end{array}
     \right).
\end{equation*}
 Gathering these facts, we obtain that
 $Z(\alpha,\beta)$ converges precisely on the set $D$ defined in \eqref{CESbiagibello2} and its sum is
 \eqref{sommosa}.
 Since $\textrm{Int}(D)=\big\{(\alpha,\beta)\in \C^2:\,|\alpha|<2\pi\big\}$, and
 since
 $$(\alpha,\beta)\mapsto
 P(\alpha,\beta):=\left(
       \begin{array}{cc}
         -\alpha & \frac{\alpha\,\beta\,(\beta-2\pi\,i)}{1-e^{-\alpha}}\\
         0 & -2\alpha \\
       \end{array}
     \right) $$
 is naturally defined and analytic on the open set
 $$\Omega=\big\{(\alpha,\beta)\in \C^2:\,\alpha\neq
 2k\pi\,i\,\,\text{with $k\in \mathbb{Z}\setminus\{0\}$}\big\},$$
 we see by direct inspection that the restriction of $Z(\alpha,\beta)$ to $\textrm{Int}(D)$
 can be analytically prolonged by the function $P(\alpha,\beta)$ on $\Omega\supset \textrm{Int}(D)$.

 We note that $P$ is singular at $(2\pi\,i,2\pi\,i)$, since one has
 $$P(2\pi\,i +\e^2,2\pi\,i+\e)=
 \left(
       \begin{array}{cc}
         -2\pi\,i -\e^2 & \dfrac{(2\pi\,i+\e^2)\,(2\pi\,i+\e)\,\e}{1-e^{-\e^2}}\\[0.4cm]
         0 &  -4\pi\,i -2\e^2 \\
       \end{array}
     \right),
 $$
 and the limit of the $(1,2)$-entry as $\e\to 0$ does not exist.
 However, the BCH series $Z(2\pi\,i,2\pi\,i)$ converges since $(2\pi\,i,2\pi\,i)\in D$.
\end{proof}
 We are left to prove what is stated in Example \ref{exa.eggertiano}, showing a
 prolongation issue dual to that highlighted in Example \ref{esempio.bello.stefano}:
 we provide a Lie algebra of real matrices in which the BCH series is not everywhere
convergent, yet admitting a global prolongation.

%
%

\begin{proof}[Proof (of Example \ref{exa.eggertiano})]
 Let $A,B,C$ be as in \eqref{esemprioduale} and let $\g=\textrm{span}\{A,B,C\}$.
 In order to prove the existence of a global extension of the BCH series $Z(X,Y)$ to the whole of
 $\g\times \g$, we apply a general abstract result by Eggert, \cite{Eggert}.

 Indeed,
 Theorem 4.4 in \cite{Eggert} proves that the BCH operation $(X,Y)\mapsto Z(X,Y)$
 (certainly well posed when $X,Y$ are close to $0\in \g$) can be analytically continued to the whole of $\g\times \g$
 if and only if the connected and simply connected Lie group $\G$ associated with $\g$ by Lie's Third Theorem is globally isomorphic to $\g$
 via the exponential map.

 Now, by some  computations  in \cite{BonfLancCPAA}, we know that the following Lie group does exactly this job:
 $\G=(\R^3,\cdot)$ where
 $$x\cdot y=(x_1+y_1, x_2+e^{2\pi\,x_1}y_2, x_3+e^{x_1}y_3). $$
 Indeed, the Lie algebra $\textrm{Lie}(\G)$ of $\G$ is the Lie algebra of vector fields spanned by
 $$X_1:=\frac{\de}{\de x_1},\quad
 X_2:=e^{2\pi\,x_1}\,\frac{\de}{\de x_2},\quad
 X_3:=e^{x_1}\,\frac{\de}{\de x_3}, $$
 and the linear map sending $X_1,X_2,X_3$ to $A,B,C$ (respectively)
 is an isomorphism of Lie algebras between $\textrm{Lie}(\G)$ and $\g$.

 After some tedious computations (see \cite{BonfLancCPAA}), one recognizes that  the exponential map $\Exp:\textrm{Lie}(\G)\to \G$ is explicitly given by
 $$\Exp(\xi_1X_1+\xi_2X_2+\xi_3X_3)= \bigg(\xi_1, \xi_2\,\frac{e^{2\pi\,\xi_1}-1}{2\pi\,\xi_1}, \xi_3\,\frac{e^{\xi_1}-1}{\xi_1}\bigg),\quad \xi_1,\xi_2,\xi_3\in \R,$$
 which is clearly invertible, with smooth inverse map.

 Summing up, by \cite[Theorem 4.4]{Eggert} we infer that the map
 $(X,Y)\mapsto Z(X,Y)$
 can be analytically continued to the whole of $\g\times \g$. However, $Z(A,B)$ does not converge,
 as it follows by our Theorem  \ref{mainteo.nonconv}, since $[A,B]=2\pi\,B$.
\end{proof}
 Finally, somewhat connected with \cite[Theorem 4.1]{BlanesCasas} (related to an unpublished
 result by Mityagin) on the improved convergence domain $\{\|X\|+\|Y\|<\pi\}$ for the BCH series $Z(X,Y)$ in matrix Lie algebras,
 we give the next example; this shows that the the latter improved domain is not a convergence domain for the Mercator logarithm $L(X,Y)$.
\begin{example}\label{mercatorfinale}
 Consider the matrices
 $x(v)$ and $y$ in
 \eqref{CESbiagi}, for $v\in \R$. By Example \ref{examBiagi}
 we already know that
 the Mercator series $L(x(v),y)$ for $\ln(e^{x(v)}e^y)$  converges in
 the Banach algebra $\mathcal{M}$ of the $2\times 2$ real matrices if and only if $v\geq -\ln\sqrt2$.

 As a consequence, if we take $v=-1$, the Mercator series $L(x(-1),y)$ does not converge, even if
 $\|x(-1)\|+\|y\|=3<\pi$.
 Here we are considering the operator norm
 $$\|A\|=\sup_{|x|=1}|Ax|,\quad A\in \mathcal{M},$$
 and $|\cdot|$ is the standard Euclidean norm on $\R^2$.

\end{example}

\section{Appendix}\label{sec:appendix}
 For the sake of completeness, we give the following:
\begin{proof}[Proof of Proposition \ref{prop.factspos}.]
 We split the proof according to the statement of the proposition.\medskip

 (a). Let $\|x\|+\|y\|<\ln 2$. Then, by the compatibility of the norm of $\mathcal{A}$ with the product, we have (see \eqref{mercatorlogexp})
 \begin{gather}\label{stimaLnrappr}
 \begin{split}
    \sum_{n=1}^\infty \|L_n(x,y)\| &\leq
    \sum_{n=1}^\infty \frac{1}{n}
    \sum_{(i_1,j_1),\ldots,(i_n,j_n)\neq (0,0)}
 \frac{\|x\|^{i_1}\|y\|^{j_1}\cdots \|x\|^{i_n}\|y\|^{j_n}}{i_1!\,j_1!\,\cdots\,i_n!\,j_n!}\\
 &= \sum_{n=1}^\infty \frac{1}{n} \Big(e^{\|x\|+\|y\|}-1\Big)^n=-\ln(2-e^{\|x\|+\|y\|})<\infty.
 \end{split}
 \end{gather}
  The last equality is a consequence of   $\|x\|+\|y\|<\ln 2$ and the trivial fact $\sum_n w^n/n=-\ln(1-w)$, valid for $w\in [-1,1)$.
  This proves the absolute convergence of the Mercator series $L(x,y)$ when $\|x\|+\|y\|<\ln 2$. The above
  computation shows that we can rearrange the sums over $n$ and over $(i_1,j_1),\ldots,(i_n,j_n)$
  as we please. We group homogeneous terms of the same degree as follows:
\begin{align*}
 &L_n(x,y)=\sum_{k=n}^\infty L_{n,k}(x,y)\quad
 \text{where}\\
 &L_{n,k}(x,y):=\frac{(-1)^{n+1}}{n}
 \!\!\!\sum_{\substack{(i_1,j_1),\ldots,(i_n,j_n)\neq (0,0)\\
 i_1+j_1+\cdots+i_n+j_n=k}}\!
  \frac{x^{i_1}y^{j_1}\cdots x^{i_n}y^{j_n}}{i_1!\,j_1!\,\cdots\,i_n!\,j_n!}\,.
  \end{align*}
 This gives the computation (the sums can be interchanged due to absolute convergence)
 \begin{align*}
    L(x,y)&=\sum_{n=1}^\infty L_n(x,y)=\sum_{n=1}^\infty
    \sum_{k=n}^\infty L_{n,k}(x,y)=\sum_{k=1}^\infty \sum_{n=1}^k L_{n,k}(x,y).
 \end{align*}
 We claim that the last member is equal to the BCH series $\sum_{k=1}^\infty Z_k(x,y)$.
 This will give the equality $L(x,y)=Z(x,y)$ when $\|x\|+\|y\|<\ln 2$. The claim is a consequence
 of the following fact:
\begin{align}\label{nonassoc}
    \sum_{n=1}^k L_{n,k}(x,y)&=
    \sum_{n=1}^k
    \frac{(-1)^{n+1}}{n}
 \!\!\!\sum_{\substack{(i_1,j_1),\ldots,(i_n,j_n)\neq (0,0)\\
 i_1+j_1+\cdots+i_n+j_n=k}}\!
  \frac{x^{i_1}y^{j_1}\cdots x^{i_n}y^{j_n}}{i_1!\,j_1!\,\cdots\,i_n!\,j_n!}=Z_k(x,y).
\end{align}
 Indeed, the last equality holds true since the middle term is precisely the associative presentation of $Z_k(x,y)$ in $\K\lan\!\lan x,y\ran\!\ran$
 (the one leading to Dynkin's presentation \eqref{notation.dynkin1} after an application of the
 Dynkin-Specht-Wever map), see e.g., \cite[Sec.\,3.1.3]{BonfiglioliFulci}. We are left to prove the absolute convergence of the BCH series when $\|x\|+\|y\|<\ln 2$:
\begin{align*}
    \sum_{k=1}^\infty \|Z_k(x,y)\|&\stackrel{\eqref{nonassoc}}{\leq}
    \sum_{k=1}^\infty
    \sum_{n=1}^k
    \frac{1}{n}
 \sum_{\substack{(i_1,j_1),\ldots,(i_n,j_n)\neq (0,0)\\
 i_1+j_1+\cdots+i_n+j_n=k}}
  \frac{\|x\|^{i_1+\cdots+i_n}\|y\|^{j_1+\cdots+j_n}}{i_1!\,j_1!\,\cdots\,i_n!\,j_n!}\\
  &=
  \sum_{n=1}^\infty
    \sum_{k=n}^\infty
    \frac{1}{n}
 \sum_{\substack{(i_1,j_1),\ldots,(i_n,j_n)\neq (0,0)\\
 i_1+j_1+\cdots+i_n+j_n=k}}
  \frac{\|x\|^{i_1+\cdots+i_n}\|y\|^{j_1+\cdots+j_n}}{i_1!\,j_1!\,\cdots\,i_n!\,j_n!}\\
  &=
  \sum_{n=1}^\infty
    \frac{1}{n}
 \sum_{(i_1,j_1),\ldots,(i_n,j_n)\neq (0,0)}
  \frac{\|x\|^{i_1+\cdots+i_n}\|y\|^{j_1+\cdots+j_n}}{i_1!\,j_1!\,\cdots\,i_n!\,j_n!}\\
  &
  \stackrel{\eqref{stimaLnrappr}}{=}-\ln(2-e^{\|x\|+\|y\|}).
    \end{align*}

 (b). Suppose that $\sum_{n} L_n(x,y)$ converges in $\mathcal{A}$. Then, by
 Abel's Lemma in Banach spaces (see e.g., \cite[Lemma 5.68]{BonfiglioliFulci}), we know that
 the power series $F(t):=\sum_{n} L_n(x,y)\,t^n$ is uniformly convergent (hence continuous)
 for $t$ in $[0,1]$, and is an $\mathcal{A}$-valued analytic function on $(0,1)$.
 Now it is a standard fact to show that there exists $\epsilon>0$ such that
\begin{equation}\label{invertlocexplog}
    \exp\Big(\sum_{n=1}^\infty \frac{(-1)^{n+1}}{n} \,w^n\Big)=I+w,\quad \text{for every $w\in \mathcal{A}$ such that $\|w\|<\epsilon$.}
\end{equation}
 As a consequence, if $t$ is suitably small (so that $\|t(e^xe^y-I)\|<\epsilon$) we have
\begin{align*}
   \exp(F(t))&\,\,=\,\,\,\exp\Big(\sum_{n=1}^\infty L_n(x,y)t^n\Big)\stackrel{\eqref{mercatorlogexp}}{=}
   \exp\bigg(\sum_{n=1}^\infty \frac{(-1)^{n+1}}{n} \Big(t(e^xe^y-I)\Big)^n\bigg)\\
   &\stackrel{\eqref{invertlocexplog}}{=}I+t(e^xe^y-I)=:G(t).
\end{align*}
 Thus, $\exp\circ F$ and $G$ are two $\mathcal{A}$-valued analytic functions on $(0,1)$,
 coinciding on some small interval $(0,\epsilon')$, with $\epsilon'>0$. By Unique Continuation we infer that
 $\exp\circ F=G$ on $(0,1)$, and by continuity we get $\exp(F(1))=G(1)$. The latter identity is precisely $\exp(L(x,y))=e^xe^y$.
  \medskip

 (c). We set $F(t):=\sum_{n} Z_n(tx,ty)$. Since $Z_n$ is a homogeneous polynomial of degree $n$, we have
 $F(t)=\sum_{n} Z_n(x,y)t^n$. Arguing as above we know that the power series $F(t)$
 is uniformly convergent (hence continuous) for $t$ in $[0,1]$, and an $\mathcal{A}$-valued
 analytic function on $(0,1)$. When $t$ is small (say $t\in [0,\epsilon]$ with $\epsilon=\epsilon(x,y)>0$), we have $\|tx\|+\|ty\|<\ln 2$, hence
 (by part (a) of the proof) $F(t)$ is a logarithm of $e^{tx}e^{ty}$:
 $$\exp(F(t))=e^{tx}e^{ty}\quad \forall\,\,t\in [0,\epsilon].$$
 Now, both sides of this identity are analytic functions of $t$ on $(0,1)$, hence this identity
 is valid throughout $(0,1)$ by Unique Continuation. By continuity, the identity remains true for $t=1$:
 $$\textstyle \exp(F(1))=e^xe^y, \qquad \text{i.e.,}\quad
 \exp\Big(\sum_{n} Z_n(x,y)\Big)=e^xe^y.$$
 This is exactly what we wanted to prove.
\end{proof}
 Next, we review some special function facts for the non-convergence
 of the series associated with the Bernoulli numbers on the boundary of the disc of convergence.
\begin{remark}\label{rem.convberno}
 Let the $B_n$'s be as in Theorem \ref{mainteo.nonconv}.
 We show that the power series
 $S(z):=\sum_{n = 0}^\infty \frac{B_n}{n!}\,z^n$ is not convergent on the boundary $\{|z|=2\pi\}$
 of its convergence disc;
 since $B_{2n+1} = 0$ for every $n\geq 1$,
 $S(z)$ converges if and only if
 the series
 $\widetilde{S}(z):=\sum_{n = 1}^\infty \frac{B_{2n}}{(2n)!}\,z^{2n}$
 converges.
 We use a formula relating the $B_n$'s to Riemann's
 $\zeta$ function (see \cite[\S 3.16, p.\,117]{WangGuo}):
  $$\zeta(2n)= \frac{(2\,\pi)^{2n}\,(-1)^{n-1}B_{2n}}{2\cdot (2n)!}, \qquad \text{for every $n\in \mathbb{N}$}.$$
  Since $\zeta(2n)\longto 1$ as $n\to\infty$,
  we get
  $\dfrac{|B_{2n}|}{(2n)!} \sim \dfrac{2}{(2\,\pi)^{2n}}$  as $n\to\infty$.
  Therefore, if $|z| = 2\,\pi$, then $\widetilde{S}(z)$ cannot converge, since
  $\frac{|B_{2n}|}{(2n)!}\,|z|^{2n}\longto 2$ as $n\to \infty$.
\end{remark}
\section*{Acknowledgements}
 The authors would like to thank the anonymous Referee of this paper whose
 constructive criticism lead us to improve a former version of the manuscript.


\end{document}